\newtheorem{teo}{Theorem}
\newtheorem{coro}{Corollary}
\newtheorem{exa}{Example}
\title{Fuzzy Group Identification Problems}
\author[1,3]{Federico Fioravanti \footnote{federico.fioravanti9@gmail.com (corresponding author)}}
\author[1,2]{Fernando Tohm\'e \footnote{ftohme@criba.edu.ar}}
\affil[1]{Instituto de Matem\'atica de Bah\'ia Blanca, CONICET - UNS, Bah\'{\i}a Blanca, Argentina}
\affil[2]{Departamento de Econom\'ia, Universidad Nacional del Sur, Bah\'ia Blanca, Argentina}
\affil[3]{Departamento de Matem\'atica, Universidad Nacional del Sur, Bah\'ia Blanca, Argentina}
\date{}
\begin{document}
\maketitle
\begin{abstract}
The Group Identification Problem (``Who is a J?'') introduced by Kasher and Rubinstein (1997) assumes a finite class of agents, each one with an opinion about the membership to a group $J$ of the members of the society, consisting in a function that indicates for each agent, including herself, the degree of membership  to  $J$. The problem is that of aggregating those functions, satisfying different sets of axioms and characterizing different {\em aggregators}.\\

The literature has already considered fuzzy versions of this problem. In this paper we consider alternative fuzzy presentations of the axiomatic of the original problem. While some results are analogous to those of the original crisp model, we show that our fuzzy version is able to overcome some of the main impossibility results of Kasher and Rubinstein.\\

\textbf{Keywords:} \textit{Group Identification Problems; Group Decision Making; Decision Analysis; Liberal; Dictator.}

\end{abstract}

\section{Introduction }
People usually classifies other people, objects or entities in groups.
Sometimes these classifications are obvious, as in the assignment of countries to the continent to which they belong.
But in many cases, opinions may not be even clear cut  and thus it becomes hard to reach a consensus. Usually human opinions involve vague concepts, and crisp assessments hardly formalize the inherent inaccuracies. 
For instance, if a group of people wants to identify which of them should be considered ``tall",  finding out  whether a member is such may be far from evident. Of course,  a consensus could be reached on that people that are 2 meters high or more are ``tall". But it is not clear whether someone whose height is 1.75 meters can be considered ``tall''.\\

This kind of classification problems, prone to vagueness and imprecision, gave the impetus for the introduction of fuzzy sets. Zadeh (1965) defined a fuzzy subset $U$ of a set $A$ as a membership function $f:A\rightarrow[0,1]$, where $f(a)$ indicates the degree of membership of $a$ in $U$. This allows to represent, in particular, preferences as degrees. In turn, there are several approaches to the aggregation of fuzzy preferences. Just to mention older contributions, (Dutta et al., 1987) deal with exact choices under vague preferences while (Dutta, 1986) investigates the structure of fuzzy aggregation rules determining fuzzy social orderings.\\

In this paper we re-examine a fuzzy approach to the group identification problem formalized by Kasher and Rubinstein (K-R) in ``Who is a J?"(1997). They consider a finite society that has to determine which one of its subsets of members consists of exactly those individuals that can be deemed to be $J$s. By a slight abuse of language, this subset is denoted $J$. Each different set of axioms postulated to yield a solution to this problem characterizes a class of aggregation functions, called Collective Identity Functions (CIFs). The Liberal one labels as a $J$ any individual that deems himself to be a $J$; the Dictatorial CIF is such that a single individual decides who is a $J$. Finally, the Oligarchic CIF determines that somebody is a $J$ if all the members of a given group agree on that.\\

After its introduction by K-R, the group identification problem in a crisp setting has been approached in many ways. Sung and Dimitrov (2005), refine the characterization of the liberal CIF, Miller (2008) studies the problem of defining more than two groups.  Saporiti and Cho (2017) analyze the incentives of voters in an incomplete preferences setting. Fioravanti and Tohm\'e (F-T) work with new axioms, finding new characterizations (2020a), and analyzing the problem in the case of infinite agents (2020b).\\

In this paper we assume that every agent, instead of labeling any of the individuals in the society as belonging or not to $J$, assigns a value in the $[0,1]$ interval to each agent, representing the degree in which that agent is believed to belong to $J$. We consider faithful fuzzy counterparts of axioms that can be found in the literature\footnote{Mainly in Kasher and Rubinstein (1997) and Fioravanti and Tohm\'e (2020a).}. Our goal in presenting faithful versions ot those crisp axioms is to remain as close as possible to the literature as to detect the notions in which the fuzzy characterizations lead to novel results. \\

We deal first with the axioms defining the Liberal aggregator, showing that the fuzzy versions of the results in K-R remain valid. But when we turn to the axioms defining the Dictatorial aggregator, our results differ from those in the crisp setting. More specifically, K-R prove an impossibility result when the domain and the range of the aggregator are restricted, indicating that the Dictatorial CIF is the only one satisfying those conditions. But there does not exist a clear ``translation'' of those restrictions into our framework, allowing different interpretations. In some of these we obtain more aggregators verifying the axioms, while in others there does not exist any of them.\\

The literature on fuzzy versions of aggregation problems has grown in the last years. Alcantud and D\'iaz (2017) study the notion of rationality in a fuzzy Arrovian framework, and analyze sequential fuzzy choice making. Duddy and Piggins (2018) prove the fuzzy counterpart of the Weymark's general oligarchy theorem (1984), while Raventos-Pujol et al. (2020) analyze Arrow's (1963) theorems in a fuzzy setting.\\

The group identification problem has also been considered from a fuzzy point of view. So, for instance, Cho and Park (2018) present a model of group identification for more than two groups, allowing fractional opinions, while Ballester and Garc\'{i}a-Lapresta (2008) deal with fuzzy opinions in a sequential model. An interesting paper, closely related to ours, is Alcantud and Andr\'{e}s Calle (2017),  who present a deep analysis of the aggregation problem of fuzzy opinions yielding fuzzy subsets. Their contribution introduces Fuzzy Collective Identity Functions (FCIFs), a concept that we adopt in this paper, defining the liberal and dictatorial aggregators, among others, presenting ways of circumventing  the classical impossibility result from K-R. The main difference between Alcantud and de Andr\'{e}s Calle's work and ours is that we find a characterization of the Liberal FCIF, and determine different domain and range conditions according to which the impossibility result can be either obtained or avoided. Later on, Alcantud and D\'{\i}az (2019) analyze the Liberal and Dictatorial aggregator in a fuzzy setting where a set of agents have to classify a set of objects in different groups (a more general framework than ours).\\

The plan of this paper is as follows. Section 2 is devoted to present the model and the set of axioms. In Section 3 we deal with Liberal aggregators while in Section 4 we focus on the Dictatorial aggregator. Finally, Section 5 concludes.

\section{Model and axioms}
Let $N=\{1,\ldots n\}$ be a set of agents that have to define who of them belongs to the group of $J$s. The opinion of agent $i$ is characterized by a function $p_{i}:N\rightarrow[0,1]$ where $p_{i}(j)$ indicates the assessment made by agent $i$ of the degree of membership of $j$ to $J$. Agent $i$ has thus a vector of opinions $P_{i}= (p_{i}(1),\ldots,p_{i}(n))$. A profile of opinions $P$ is a $n\times n$ matrix $P=(P_{1},\ldots,P_{n})$. With $\textbf{P}$ we denote the set of all the profiles of opinions. \\

A fuzzy subset $J$ of $N$ is characterized by a {\it membership (characteristic) function} $f_{J}:N\rightarrow [0,1]$, that indicates the degree of membership of an agent to $J$. Let $\textbf{F}=\{f_{J}:N\rightarrow [0,1]\}$ be the set of all possible membership functions of a fuzzy subset $J$. We denote by $\mathit{FJ}$ the {\it Fuzzy Collective Identity Function} (FCIF) such that $\mathit{FJ}:\textbf{P}\rightarrow \textbf{F}$.\\ 

A FCIF takes a profile of opinions and returns a membership function for the fuzzy subset $J$. More precisely, the membership function of the set $J$ associated to the profile $P\in\textbf{P}$ is denoted $f_{J}^{P}$. We do not impose further restrictions on this membership function. \\

In what follows we present in an axiomatic way the properties that a social planner would like to see implemented by a ``fair'' aggregation process. Most of them are fuzzy versions of properties already postulated by K-R in their seminal work. The following two axioms state that, if the opinion about an agent $i$ changes, say by increasing (decreasing) her degree of membership, then the aggregated degree of $i$ should at least (at most) the same as the previous one.  
\begin{itemize}
\item \textbf{Fuzzy Monotonicity} (FMON): let $P\in \textbf{P}$ be such that $f_{J}^{P}(i)=a$ and let $P'$ be a profile such that $P'_{k,i}> P_{k,i}$ for some $k$ with $P'_{h,j}= P_{h,j}$ for all $(h,j)\neq (k,i)$; then $f_{J}^{P'}(i)\geq a$.\\ 
\item \textbf{Fuzzy Strong Monotonicity} (FSMON): let $P\in \textbf{P}$ be such that $f_{J}^{P}(i)=a$ and let $P'$ be a profile such that $P'_{k,i}> P_{k,i}$ for some $k$ with $P'_{h,j}= P_{h,j}$ for all $(h,j)\neq (k,i)$,then $f_{J}^{P'}(i)>a$.\\ 
\end{itemize}
It is easy to see that if a FCIF verifies FSMON then it satisfies FMON.\\

The next axiom states that the aggregate opinion about an agent is bounded by the upper and lower bounds of all the individual opinions about that agent. That is:
\begin{itemize}
\item \textbf{Fuzzy Consensus} (FC): if for some \mbox{$a_{j}, b_{j}\in [0,1]$} and for all $i\in N$ $p_{i}(j)\geq a_{j}$ and $p_{i}(j)\leq b_{j}$, then $a_{j}\leq f_{J}^{P}(j)\leq b_{j}$.\footnote{Alcantud and de Andr\'{e}s Calle (2017) call this property {\em Unanimity}. Notice that if $a_{j}=0$ and $b_{j}=1$, every FCIF trivially satisfies this axiom.}
\end{itemize}
The following axiom states that if two agents are evaluated in a similar way, the FCIF must classify them also similarly:\footnote{SYM and FSYM will not be involved in the characterization results to be presented in Sections 3 and 4. Nevertheless, the difference between the crisp version presented in K-R and the fuzzy one is interesting enough to justify discussing them.}
\begin{itemize}
\item \textbf{Symmetry} (SYM): agents $j$ and $k$ are symmetric if:
\begin{itemize}
\item $p_{i}(j)=p_{i}(k)$ for all $i\in N-\{j,k\}$
\item $p_{j}(i)=p_{k}(i)$ for all $i\in N-\{j,k\}$
\item $p_{j}(k)=p_{k}(j)$
\item $p_{j}(j)=p_{k}(k)$
\end{itemize}

Then, if two agents $j$ and $k$ are symmetric it follows that $f_{J}^{P}(j)=f_{J}^{P}(k)$.
\end{itemize}
The following property is our first ``true'' fuzzy axiom. It considers an arbitrary threshold, $\alpha \in [0,1]$ that can be interpreted as indicating that, if an agent gets at least the degree $\alpha$, she is approved to belong to the group of $J$s.\footnote{While $\alpha=0.5$ represents majority approval voting, $\alpha$ can be given any other value.}
\begin{itemize}
\item \textbf{Fuzzy Symmetry} (FSYM): let $\alpha\in[0,1]$. Agents $j$ and $k$ are fuzzy-symmetric if:
\begin{itemize}
\item $p_{i}(j),p_{i}(k)\geq \alpha$ or $p_{i}(j),p_{i}(k)<\alpha$ for all $i\in N-\{j,k\}$
\item $p_{j}(i),p_{k}(i)\geq \alpha$ or $p_{j}(i),p_{k}(i)<\alpha$ for all $i\in N-\{j,k\}$
\item $p_{j}(k),p_{k}(j)\geq \alpha$ or $p_{j}(k),p_{k}(j)<\alpha$
\item $p_{j}(j),p_{k}(k)\geq \alpha$ or $p_{j}(j),p_{k}(k)<\alpha$ 
\end{itemize}
If two agents $j$ and $k$ are fuzzy-symmetric then $f_{J}^{P}(j),f_{J}^{P}(k)\geq \alpha$ or \mbox{$f_{J}^{P}(j),f_{J}^{P}(k)<\alpha$}.
\end{itemize}
While SYM means that $f_{J}^{P}(j)=f_{J}^{P}(k)$ for all $j,k\in N$, FSYM does not impose such a strong condition, since it prescribes that both agents must get either a higher or a lower degree. The following examples may be useful to understand the difference:
\begin{exa}
Consider $N=\{1,2\}$.
\begin{itemize}
\item The FCIF $f_J$ such that $f_{J}(i)=\frac{p_{1}(i)+p_{2}(i)}{2}$ satisfies SYM but not FSYM with $\alpha=0.5$. To see this, consider the profile $P=(\{0.1,0.9\},\{0.6,0.4\})$. We have that $f_{J}^{P}(1)=0.35<0.5$ and $f_{J}^{P}(2)=0.65>0.5$.
\item The FCIF $f_J$ such that $f_{J}(1)=0.2p_{1}(1)+0.2p_{2}(1)$ and $f_{J}(2)=0.1p_{1}(2)+0.1p_{2}(2)$ verifies FSYM if $\alpha=0.5$, but not SYM.
\end{itemize}
\end{exa}
The next axiom states that the aggregate opinion about an agent should only take into account the individual opinions about her. The downside is that its fuzzy version violates the idea that the aggregate opinion about an agent depends only on the opinions about her.\footnote{The good thing is that the results in the following section are as we expect them to be under both axioms.}.
\begin{itemize}
\item \textbf{Independence} (I): let $P$ and $P'$ be two profiles such that given an agent $j\in N$, $p_{i}(j)=p'_{i}(j)$ for all $i\in N$. Then $f_{J}^{P}(j)=f_{J}^{P'}(j)$.
\item \textbf{Fuzzy Independence} (FI):  let $\alpha\in[0,1]$. Let $P$ and $P'$ be two profiles such that given an agent $j\in N$ and for every $i\in N$ we have $p_{i}(j)\geq \alpha$ iff $p'_{i}(j)\geq \alpha$. Then $f_{J}^{P}(j)\geq \alpha$ iff $f_{J}^{P'}(j)\geq \alpha$.
\end{itemize}
The difference between the last two axioms is that while I means that $f_{J}(i)=f(p_{1}(i),\ldots,p_{n}(i))$, FI allows the aggregate opinion about any agent to be affected by the opinions held about other agents.\\

The next two examples show how these axioms affect the definition of the FCIF:
\begin{exa}
Consider $N=\{1,2\}$.
\begin{itemize}
\item The FCIF $f_J$ such that $f_{J}(i)=\frac{p_{1}(i)+p_{2}(i)}{2}$ verifies I but not FI with $\alpha=0.5$ as it can be seen in profiles $P=(\{0.1,0.6\},\{0.6,0.7\})$ and $P'=(\{0.45,0.6\},\{0.95,0.7\})$.\\
We have that $f_{J}^{P}(1)=0.35<0.5$ and $f_{J}^{P'}(1)=0.7>0.5$.
\item The FCIF $f_J$ such that $f_{J}(1)=0.25p_{1}(1)$$+0.25p_{1}(2)$ and $f_{J}(2)=0.25p_{2}(1)+$$0.25p_{2}(2)$ verifies FI with $\alpha=0.5$ but not I.
\end{itemize}
\end{exa}
The following axiom states that the opinion that an agent has about herself should be considered important to determine whether she is a $J$ or not.
\begin{itemize}
\item\textbf{Liberalism} (L): if $p_{i}(i)=1$ for some $i\in N$, then $f_{J}^{P}(k)=1$ for some $k\in N$.\\ 
If $p_{i}(i)=0$ for some $i\in N$, then $f_{J}^{P}(k)=0$ for some $k\in N$.
\end{itemize}
A fuzzy version of this axiom is:
\begin{itemize}
\item\textbf{Fuzzy Liberalism} (FL): let $\alpha\in[0,1]$. Then $p_{i}(i)\geq \alpha$ for some $i\in N$ if, and only if, $f_{J}^{P}(k)\geq \alpha$ for some $k\in N$. \\
\end{itemize}
The following example shows the difference between these two axioms:
\begin{exa}
Consider $N=\{1,\ldots,n\}$.
\begin{itemize}
\item The FCIF such that $f_{J}(i)=1$ if $p_{i}(i)=1$ and $f_{J}(i)=0$ otherwise, verifies L but not FL with $\alpha=0.5$.
\item The FCIF such that $f_{J}(i)=0.9$ if $p_{i}(i)\geq 0.5$ and $f_{J}(i)=0.1$ if  $p_{i}(i)< 0.5$, verifies FL with $\alpha=0.5$ but not L.
\end{itemize}
\end{exa} 
Other relevant properties concern the capacity that any agent may have of contributing to determine that any other is in $J$ (like parents with respect to the religious affiliation of their kids):

\begin{itemize}
\item \textbf{Extreme Liberalism} (EL):
\begin{itemize}

\item[(i)] If $p_{i}(j)=1$ for some $i,j\in N$, then $f_{J}^{P}(k)=1$ for some $k\in N$.
\item[(ii)] If $p_{i}(j)=0$ for some $i,j\in N$, then $f_{J}^{P}(k)=0$ for some $k\in N$.

\end{itemize}
\end{itemize}

A fuzzy version is:
\begin{itemize}
\item \textbf{Fuzzy Extreme Liberalism} (FEL): let $\alpha\in[0,1]$.
\begin{itemize}
\item[(i)] If $p_{i}(j)\geq \alpha$ for some $i,j\in N$, then $f_{J}^{P}(k)\geq \alpha$ for some $k\in N$. 
\item[(ii)] If $p_{i}(j)< \alpha$ for some $i,j\in N$, then $f_{J}^{P}(k)< \alpha$ for some $k\in N$.
\end{itemize}
\end{itemize}

Some final examples illustrate the difference between these last two axioms:
\begin{exa} 
Consider $N=\{1,\ldots,n\}$.
\begin{itemize}
\item A FCIF that satisfies EL but not FEL with $\alpha=0.5$: $f_J$ such that $f_{J}(i)=1$ for all $i\in N-\{1\}$ and $f_{J}(1)=0$ if there exist a $j,k\in N$ such that $p_{j}(k)=0$ or $f_{J}(1)=1$ otherwise.
\item A FCIF that verifies FEL with $\alpha=0.5$ but not EL: $f_{J}$ such that $f_{J}(1)=0.9$, $f_{J}(2)=0.1$ and $f_{j}(i)=0.5$ for all $i\in N-\{1,2\}$ and for every $P\in \textbf{P}$.
\end{itemize}
\end{exa} 
\section{Liberalism}
We define the Strong Liberal FCIF as:
$$ \mathit{L}(P_{1},\ldots,P_{N})(i)=L_{J}(i)=p_{i}(i)$$
\noindent for all $i\in N$.\\

It is straightforward to see that this FCIF is analogous to the Strong Liberal CIF that K-R introduce in their work. It verifies FMON, FC, FI and FL. Moreover, it is the only FCIF that satisfies this set of axioms:
\begin{teo}
The only FCIF that satisfies FMON, FC, FI and FL is the Strong Liberal FCIF.
\end{teo}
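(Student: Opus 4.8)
The plan is to prove the two inclusions separately. That $L_J$ satisfies FMON, FC, FI and FL is routine and already sketched in the text: $f_J^P(i)=p_i(i)$ is (weakly) increasing in $p_i(i)$ and unaffected by the remaining entries of column $i$, it always lies in $[\min_k p_k(i),\max_k p_k(i)]$, its position relative to $0.5$ is exactly that of $p_i(i)$, and FL holds with witness $k=i$. So the content is the converse: if a FCIF $FJ$ satisfies the four axioms then $f_J^P(i)=p_i(i)$ for every profile $P$ and every agent $i$. Fix such $P,i$ and write $a=p_i(i)$.

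I would first isolate two easy facts. \emph{Fact 1 (constant column):} if $p_k(i)=a$ for all $k$, then FC forces $f_J^P(i)=a$ (letting the upper bound $b_i\to 0$ to cover $a=0$). \emph{Fact 2 (threshold):} $f_J^P(i)\ge 0.5$ iff $a\ge 0.5$. For Fact 2 with $a\ge 0.5$, build the auxiliary profile $P'$ that agrees with $P$ on column $i$ and has $p'_k(j)=0$ for every $k$ and every $j\ne i$; by FC, $f_J^{P'}(j)=0<0.5$ for all $j\ne i$; since $p'_i(i)=a\ge 0.5$, FL yields $f_J^{P'}(k)\ge 0.5$ for some $k$, and the only remaining candidate is $k=i$; finally FI, applied with column $i$ unchanged, transfers this back to $P$. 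The case $a<0.5$ is symmetric, using all-ones columns together with the second clauses of FC and FL.

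The heart of the proof is to upgrade these facts to the exact equality $f_J^P(i)=a$ for an arbitrary column $i$. My approach would be induction on the number of agents $k\ne i$ with $p_k(i)\ne a$: the base case is Fact 1, and in the inductive step, relaxing one ``dissenting'' entry down (resp.\ up) to $a$ produces a profile with fewer dissenters, to which the inductive hypothesis applies, so FMON gives $f_J^P(i)\ge a$ if some dissenter has $p_k(i)>a$ and $f_J^P(i)\le a$ if some dissenter has $p_k(i)<a$. This already settles every profile in which dissenting opinions about $i$ straddle $a$, as well as every constant column. \textbf{The main obstacle} is the remaining ``one-sided'' configuration, where $p_i(i)=a$ but every $p_k(i)\ge a$ (with strict inequality somewhere), and its dual: here FMON only delivers $a\le f_J^P(i)$ and FC only delivers $f_J^P(i)\le\max_k p_k(i)$, so the value is a priori pinned only to the interval $[a,\max_k p_k(i)]$. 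Closing this gap is the genuinely delicate point: I would try to exploit FI together with Fact 2 by sliding the entries of column $i$ across the $0.5$ threshold along a monotone chain of profiles, forcing $f_J^P(i)$ to coincide with the value it takes on a nearby constant column, and, if necessary, invoking FL simultaneously for several agents to remove the slack. I expect every other point in the argument to be routine.
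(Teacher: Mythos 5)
Your Fact~2 is, in substance, the \emph{entire} proof given in the paper: the authors suppose $p_{i}(i)\geq 0.5$ but $f_{J}^{P}(i)<0.5$, use FMON to push the off-diagonal entries of column $i$ below $0.5$, then compare with a profile whose other columns lie uniformly below $0.5$, and derive exactly the FC--FL--FI contradiction you describe (your variant, which copies column $i$ verbatim into the auxiliary profile, even makes the FMON step unnecessary). The paper stops there: it never addresses the passage from the threshold statement ``$f_{J}^{P}(i)\geq 0.5$ iff $p_{i}(i)\geq 0.5$'' to the exact identity $f_{J}^{P}(i)=p_{i}(i)$, which is precisely the step you single out as the delicate one. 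One small repair to your construction: FC as stated requires the bounds $a_{j},b_{j}$ to be strictly positive, so fill the auxiliary columns with, say, $0.1$ rather than $0$; nothing else changes.

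The obstacle you flag is real and cannot be overcome: the literal uniqueness claim fails. Define, for each $i$ and each profile $P$,
$$ f_{J}^{P}(i)=\left\{\begin{array}{ll}\max_{j} p_{j}(i) & \mbox{if } p_{i}(i)\geq 0.5,\\ \min_{j} p_{j}(i) & \mbox{if } p_{i}(i)<0.5.\end{array}\right.$$
This FCIF satisfies FC (its value is always one of the $p_{j}(i)$), FMON (raising any entry of column $i$ weakly raises both $\min_{j}p_{j}(i)$ and $\max_{j}p_{j}(i)$, and a regime switch at $p_{i}(i)=0.5$ moves the value from the column's minimum to its maximum, hence upward), and FI and FL (because $f_{J}^{P}(i)\geq 0.5$ holds exactly when $p_{i}(i)\geq 0.5$); yet it is not the Strong Liberal FCIF (take $p_{1}(1)=0.6$, $p_{2}(1)=1$, giving $f_{J}^{P}(1)=1$). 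The four axioms only constrain the position of $f_{J}^{P}(i)$ relative to $0.5$ and confine it to $[\min_{j}p_{j}(i),\max_{j}p_{j}(i)]$; they cannot pin down its exact value, so no ``sliding'' argument will close your one-sided case. Your proposal therefore establishes exactly as much as the paper's own proof does, and the theorem is correct only if ``is the Strong Liberal FCIF'' is read as agreeing with $L_{J}$ on which side of the threshold each $f_{J}^{P}(i)$ lies.
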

\begin{proof}
 Suppose, on the contrary, that there exists another FCIF satisfying FMON, FC, FI and FL, $f_j \neq L_J$. Consider then a profile $P$ to be such that $f_{J}^{P}(i)= b \neq p_i(i) = a$. $P$ exists since we have assumed that $f_j \neq L_J$. Suppose that $a >b$ (the case with $a < b$ can be analyzed analogously). Using FMON several times we can create a profile $P'$ identical to $P$ except that $p'_{j}(i)\leq b$ for all $j\neq i$, in such a way that $f_{J}^{P'}(i)\leq b$. Consider a profile $P''$ such that $p''_{i}(i)\geq a$ and $p''_{j}(k)\leq b$ for all $j,k\in N$ (except when $j=k=i$). By FC we have that $f_{J}^{P''}(k)\leq b$ for all $k\neq i$. So the set of agents $k \in N$ such that $f_{J}^{P''}(k)\geq a$ is either $\emptyset$ or consists just of $i$. Because of FL we have $f_{J}^{P''}(i)\geq  a$. But then we have a contradiction with FI, because agent $i$ is treated similarly (in the sense of FI) in profiles $P'$ and $P''$ but $f_{J}^{P'}(i)\leq b$ and $f_{J}^{P''}(i)\geq a$.
\end{proof}
This result is still valid even if we use a mixture of fuzzy and crisp axioms:
\begin{coro}
The only FCIF that verifies FMON, FC, I and L is the Strong Liberal FCIF.
\end{coro}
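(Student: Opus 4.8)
The plan is to re-run the proof of Theorem~1 with I playing the role of FI and L playing the role of FL. The first, routine, step is to check that the Strong Liberal FCIF $L_J(i)=p_i(i)$ satisfies FMON, FC, I and L: the first three are immediate from $f_J^P(i)=p_i(i)$, and L holds because $p_i(i)\in\{0,1\}$ forces $f_J^P(i)=p_i(i)\in\{0,1\}$, so in both clauses of L one may take $k=i$.

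For uniqueness, let $FJ$ satisfy the four axioms and suppose, toward a contradiction, that some profile $P$ has $p_i(i)\ge 0.5$ but $f_J^P(i)<0.5$ (the other case being symmetric). As in Theorem~1, I would use FMON repeatedly to pass to a profile $P'$ that agrees with $P$ except that $p'_k(i)<0.5$ for all $k\ne i$, still with $f_J^{P'}(i)<0.5$, and then compare with the extreme profile $P''$ in which $p''_k(j)<0.5$ for all $(k,j)\ne(i,i)$, so that FC forces $f_J^{P''}(j)<0.5$ for every $j\ne i$. The substitute for the ``FL step'' is the following chain: from the fully extreme profile (with $p_i(i)=1$ and all other entries $0$) FC kills every $f_J^P(j)$, $j\ne i$, so L forces $f_J^P(i)=1$; by Independence this identifies $f_J^P(i)=1$ whenever the $i$-th column is the unit vector, and by FMON (monotonicity in each column entry) one gets $p_i(i)=1\Rightarrow f_J^P(i)=1$ in general, and dually $p_i(i)=0\Rightarrow f_J^P(i)=0$. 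Combined with FMON, FC and Independence, these boundary identities should force $f_J^{P'}(i)$ and $f_J^{P''}(i)$ to coincide, contradicting $f_J^{P'}(i)<0.5\le f_J^{P''}(i)$, and then $f_J^P(i)=p_i(i)$ follows as in Theorem~1.

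The main obstacle is exactly this coupling of L with I. Since L constrains only the extreme aggregate values $0$ and $1$ — not the threshold $0.5$ that Theorem~1 exploits — one cannot apply L directly to $P''$ unless $p''_{i}(i)$ is set equal to $1$, and then $P'$ and $P''$ need not share their $i$-th column (their $(i,i)$ entries differ), so Independence does not apply verbatim. Bridging from the $\{0,1\}$-information supplied by L and FC to the intermediate levels of $p_i(i)$ — that is, recovering the $0.5$-threshold behaviour at every level and then promoting it to the identity $f_J^P(i)=p_i(i)$ — is where the real work sits; in particular a reduction of the form ``I $\Rightarrow$ FI and L $\Rightarrow$ FL'' does not hold unconditionally, so the argument has to be reassembled rather than merely quoted from Theorem~1.
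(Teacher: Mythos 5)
Your instinct about where the difficulty lies is exactly right, and the step you defer is not merely ``the real work'' --- it cannot be carried out. What your argument does establish correctly is the pair of boundary identities: starting from the profile whose $i$-th column is the $i$-th unit vector and whose other columns are identically $0$, FC forces $f_J^P(j)=0$ for all $j\neq i$, L then forces $f_J^P(i)=1$, I propagates this to every profile with that same $i$-th column, and FMON extends it to every profile with $p_i(i)=1$; dually, $p_i(i)=0\Rightarrow f_J^P(i)=0$. But FMON, FC, I and L impose no further constraint at intermediate values of $p_i(i)$, and no reassembly of the Theorem~1 argument will bridge that gap, because the statement fails without a threshold-type axiom. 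Concretely, the FCIF defined by $f_J^P(i)=\max\{\min_j p_j(i),\ p_i(i)^2\}$ satisfies all four axioms: it depends only on the $i$-th column (I), is weakly increasing in each entry of that column (FMON), lies between the minimum and the maximum of the column since $p_i(i)^2\le p_i(i)\le \max_j p_j(i)$ (FC), and returns $1$ when $p_i(i)=1$ and $0$ when $p_i(i)=0$ (L). Yet for $n=2$ and the column $(p_1(1),p_2(1))=(0.5,0.2)$ it returns $0.25\neq 0.5=p_1(1)$, so it is not the Strong Liberal FCIF.

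The paper gives no separate proof of this corollary --- it is asserted to follow from Theorem~1 ``even with a mixture of fuzzy and crisp axioms'' --- but the Theorem~1 argument genuinely needs FI and FL: those axioms impose the $0.5$-threshold condition at \emph{every} profile, which is exactly the information that I and L fail to supply away from the extreme values $0$ and $1$. As you note, I does not imply FI and L does not imply FL (the paper's own Examples~2 and~3 show this), so the corollary cannot be obtained by quotation; and the counterexample above shows it cannot be obtained at all from these four axioms. Your proposal is therefore not a proof of the stated corollary; what it does prove is the weaker pair of implications $p_i(i)=1\Rightarrow f_J^P(i)=1$ and $p_i(i)=0\Rightarrow f_J^P(i)=0$, and any honest write-up should either stop there or restore a fuzzy (threshold) version of at least one of the two crisp axioms.
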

From the uniqueness of the Strong Liberal FCIF we obtain the following result:
\begin{coro}
If a FCIF verifies FMON, FC, FI and FL then it satisfies SYM and FSYM. 
\end{coro}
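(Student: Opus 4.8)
The plan is to exploit the uniqueness established in Theorem~1: any FCIF satisfying FMON, FC, FI and FL must coincide with the Strong Liberal FCIF, so it suffices to verify that the map $f_J^P(i)=p_i(i)$ itself satisfies SYM and FSYM. This reduces the corollary to a direct inspection of the definitions, and no further machinery is needed.

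First I would recall from Theorem~1 that the hypotheses force $f_J^P(i)=L_J(i)=p_i(i)$ for every $i\in N$ and every profile $P$. Then, for SYM, I would take two symmetric agents $j$ and $k$ and read off from the last clause of the definition of symmetry that $p_j(j)=p_k(k)$; substituting into the formula for the Strong Liberal FCIF gives $f_J^P(j)=p_j(j)=p_k(k)=f_J^P(k)$, which is exactly what SYM requires. The other three clauses of the symmetry definition are not even needed here.

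For FSYM, I would take two fuzzy-symmetric agents $j$ and $k$ and use the fourth clause of that definition, namely that $p_j(j),p_k(k)\ge 0.5$ or $p_j(j),p_k(k)<0.5$. Since $f_J^P(j)=p_j(j)$ and $f_J^P(k)=p_k(k)$, this immediately yields $f_J^P(j),f_J^P(k)\ge 0.5$ or $f_J^P(j),f_J^P(k)<0.5$, which is precisely the conclusion of FSYM.

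There is essentially no obstacle: the only subtlety is making sure the appeal to Theorem~1 is legitimate, i.e.\ that the FCIF in the hypothesis of the corollary is genuinely pinned down to the Strong Liberal FCIF before one starts checking SYM and FSYM. Once that is granted, both verifications are one-line substitutions, so the proof will be short.
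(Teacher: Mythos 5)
Your proposal is correct and follows exactly the route the paper intends: the corollary is stated as a consequence of the uniqueness in Theorem~1, so one reduces to checking that the Strong Liberal FCIF $f_J^P(i)=p_i(i)$ satisfies SYM (via the clause $p_j(j)=p_k(k)$) and FSYM (via the clause $p_j(j),p_k(k)\geq 0.5$ or $p_j(j),p_k(k)<0.5$). Your one-line substitutions are precisely the verification needed, and there is no gap.
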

From the fact that the Liberal FCIF does not verify FSMON and that this axiom implies FMON we get:
\begin{coro}
There is no FCIF that satisfies FSMON, FC, FI and FL.
\end{coro}
The following two FCIFs are the fuzzy counterparts of the Unanimity and Inclusive CIF (from Fioravanti and Tohm\'e (2020a)):
\begin{itemize}
\item The {\em Unanimity} FCIF is defined as:
$$\mathit{U}(P_{1},\ldots,P_{N})(i)=U_{J}(i)=\min_{j}p_{j}(i)$$
\noindent for all $i\in N$.\footnote{Alcantud and de Andr\'{e}s Calle (2017) define this aggregator as the {\it Conjunctive} FCIF.}
\item The {\em Inclusive} FCIF is defined as:
$$\mathit{Inc}(P_{1},\ldots,P_{N})(i)=\mathit{Inc}_{J}(i)=\max_{j}p_{j}(i)$$
\noindent for all $i\in N$.\footnote{Alcantud and de Andr\'{e}s Calle (2017) call this aggregator the {\it Benevolent} FCIF.}
\end{itemize}
As in the crisp case, under extreme concepts of liberalism like EL or FEL, we obtain the same uniqueness results:
\begin{teo}
$\mathit{Inc}$ is the only FCIF that verifies FMON, FC, FI and EL (i) or FEL (i).\\ 
$\mathit{U}$ is the only FCIF that verifies FMON, FC, FI and EL(ii) or FEL(ii). 
\end{teo}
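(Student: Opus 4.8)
The plan is to follow the proof of Theorem~1 essentially verbatim, in two dual halves. First one checks that $\mathit{Inc}$ and $\mathit{U}$ do satisfy the listed axioms: $\max_{j}$ and $\min_{j}$ are non-decreasing in each coordinate, so FMON holds; $\mathit{Inc}_{J}(i)$ and $\mathit{U}_{J}(i)$ both lie in $[\min_{j}p_{j}(i),\max_{j}p_{j}(i)]$, so any common bounds $a_{i}\le p_{j}(i)\le b_{i}$ are inherited and FC holds; $\max_{j}p_{j}(i)\ge 0.5$ iff some $p_{j}(i)\ge 0.5$ while $\min_{j}p_{j}(i)\ge 0.5$ iff every $p_{j}(i)\ge 0.5$, and both conditions depend only on the $0.5$-pattern of column $i$, so FI holds; finally, if $p_{i}(j)\ge 0.5$ (resp. $p_{i}(j)=1$) then already $\mathit{Inc}_{J}(j)=\max_{l}p_{l}(j)\ge p_{i}(j)$, giving FEL(i) and EL(i) for $\mathit{Inc}$, and dually $\mathit{U}$ satisfies FEL(ii) and EL(ii).

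For the uniqueness of $\mathit{Inc}$, let $\mathit{FJ}$ be an FCIF satisfying FMON, FC, FI and EL(i). As in Theorem~1 it is enough to show that $f_{J}^{P}(i)\ge 0.5$ if and only if $p_{j}(i)\ge 0.5$ for some $j$, i.e. that $\mathit{FJ}$ classifies agents exactly as $\mathit{Inc}$ does. One implication is immediate from FC: if $p_{j}(i)<0.5$ for all $j$ then column $i$ is bounded above by some $b_{i}<0.5$, so $f_{J}^{P}(i)\le b_{i}<0.5$. For the converse, suppose $p_{j_{0}}(i)\ge 0.5$ for some $j_{0}$ but $f_{J}^{P}(i)<0.5$. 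Applying FMON one entry at a time to lower every $P_{j,i}$ with $j\ne j_{0}$ below $0.5$, one gets a profile $P'$ differing from $P$ only in column $i$, with $P'_{j,i}<0.5$ for $j\ne j_{0}$, $P'_{j_{0},i}=P_{j_{0},i}\ge 0.5$ and still $f_{J}^{P'}(i)\le f_{J}^{P}(i)<0.5$. Now take a profile $P''$ in which every entry other than $(j_{0},i)$ equals a fixed $c\in(0,0.5)$ and $P''_{j_{0},i}=1$. By FC every column $k\ne i$ gives $f_{J}^{P''}(k)=c<0.5$, and by EL(i), since $P''_{j_{0},i}=1$, some $k$ has $f_{J}^{P''}(k)=1$; this forces $k=i$, so $f_{J}^{P''}(i)=1\ge 0.5$. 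But $P'$ and $P''$ have the same $0.5$-pattern in column $i$ --- only agent $j_{0}$'s opinion about $i$ reaches $0.5$ --- so FI gives $f_{J}^{P'}(i)\ge 0.5\iff f_{J}^{P''}(i)\ge 0.5$, contradicting $f_{J}^{P'}(i)<0.5$ and $f_{J}^{P''}(i)\ge 0.5$. The very same profile $P''$ also settles the case in which FEL(i) replaces EL(i): FEL(i) yields some $k$ with $f_{J}^{P''}(k)\ge 0.5$, again forced to be $i$.

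The characterisation of $\mathit{U}$ is the mirror image under the exchanges $\ge 0.5\leftrightarrow<0.5$, $\max\leftrightarrow\min$, $1\leftrightarrow 0$, using the second clauses of FMON, FC and of EL/FEL. Thus if $p_{j}(i)\ge 0.5$ for all $j$ then FC with $a_{i}=\min_{j}p_{j}(i)\ge 0.5$ gives $f_{J}^{P}(i)\ge 0.5$; and if $p_{j_{0}}(i)<0.5$ for some $j_{0}$ while $f_{J}^{P}(i)\ge 0.5$, one raises every $P_{j,i}$ with $j\ne j_{0}$ above $0.5$ by FMON, obtaining $P'$ with $f_{J}^{P'}(i)\ge 0.5$; one then takes $P''$ with every entry other than $(j_{0},i)$ equal to a fixed $c\in[0.5,1)$ and $P''_{j_{0},i}=0$, so that FC gives $f_{J}^{P''}(k)\ge 0.5$ for $k\ne i$ while EL(ii) (or FEL(ii)) forces some $f_{J}^{P''}(k)<0.5$, hence $f_{J}^{P''}(i)<0.5$; FI applied to column $i$ of $P'$ and $P''$ then gives the contradiction. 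This yields $f_{J}^{P}(i)\ge 0.5$ iff $p_{j}(i)\ge 0.5$ for all $j$, i.e. $\mathit{FJ}$ classifies as $\mathit{U}$.

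The difficulty is not conceptual but lies in three bookkeeping points. First, FMON constrains only the aggregate of the column actually perturbed, so every passage from $P$ to $P'$ must stay within column $i$, and the only legitimate bridge from the ``realistic'' profile $P'$ to the ``extreme'' auxiliary profile $P''$ is FI --- which is why column invariance does all the work. Second, to cover the EL and the FEL versions with a single construction it is convenient to put the extreme value $1$ (resp. $0$) in the distinguished entry of $P''$: then EL(i) and FEL(i) (resp. EL(ii) and FEL(ii)) both fire, since $1\ge 0.5$ and $0<0.5$. Third, the requirement $a_{j},b_{j}>0$ in FC must be respected; in the constructed profiles this is automatic, and in the ``easy'' implication of the $\mathit{Inc}$ half --- where an arbitrary column might contain a $0$ --- it suffices to first raise any such entry to an arbitrarily small positive value: by FMON this does not decrease the aggregate, so a bound $f_{J}^{\hat{P}}(i)<0.5$ obtained from FC on the nudged profile $\hat{P}$ still gives $f_{J}^{P}(i)<0.5$.
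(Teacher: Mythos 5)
Your proof is correct and follows exactly the route the paper intends: the paper's own proof of Theorem~2 consists of the single remark that ``a similar construction as the one used for proving Theorem~1 yields the proof,'' and your argument is precisely that construction spelled out --- FMON to push the other entries of column $i$ across the $0.5$ threshold, an extreme auxiliary profile $P''$ built around the single entry $(j_{0},i)$ so that FC pins down every other column and EL(i)/FEL(i) (resp.\ EL(ii)/FEL(ii)) plays the role FL played in Theorem~1, and FI to bridge $P'$ and $P''$. The only caveat, which you inherit from the paper's own proof of Theorem~1 rather than introduce yourself, is that the argument establishes that any such FCIF agrees with $\mathit{Inc}$ (resp.\ $\mathit{U}$) on the $0.5$-classification of every agent, which is strictly weaker than the literal claim $f_{J}^{P}(i)=\max_{j}p_{j}(i)$ (resp.\ $=\min_{j}p_{j}(i)$) for every profile.
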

\begin{proof}
A similar construction as the used in the proof of Theorem 1 yields the proof of the two statements.
\end{proof}
We can derive the following impossibility result from Theorem 2:
\begin{coro}
There is no FCIF that verifies FMON, FC, FI and FEL or EL.
\end{coro}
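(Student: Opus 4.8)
The plan is to obtain the result as an immediate consequence of Theorem 2, by observing that the conjunction of its two uniqueness statements is contradictory. First I would note that the axiom FEL is, by definition, the conjunction of its two clauses FEL(i) and FEL(ii), and likewise EL is the conjunction of EL(i) and EL(ii). Hence, if some FCIF $\mathit{FJ}$ satisfies FMON, FC, FI and FEL, then in particular it satisfies FMON, FC, FI and FEL(i), so by the first part of Theorem 2 we must have $\mathit{FJ}=\mathit{Inc}$; but $\mathit{FJ}$ also satisfies FMON, FC, FI and FEL(ii), so by the second part of Theorem 2 we must have $\mathit{FJ}=\mathit{U}$. The same reasoning applied to EL(i) and EL(ii) yields the identical conclusion. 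In either case we are forced to conclude $\mathit{Inc}=\mathit{U}$ as FCIFs.

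The second and final step is to exhibit that $\mathit{Inc}\neq \mathit{U}$, which contradicts the above and therefore shows that no such FCIF can exist. For this it suffices to produce a single profile on which the two aggregators disagree: taking $|N|\geq 2$ and any profile $P$ with $p_{1}(1)=0$ and $p_{2}(1)=1$, we get $\mathit{Inc}_{J}(1)=\mathit{max}_{j}p_{j}(1)=1$ while $\mathit{U}_{J}(1)=\mathit{min}_{j}p_{j}(1)=0$, so the two membership functions differ on agent $1$. This establishes the desired contradiction and completes the proof for both the FEL and the EL version of the statement.

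I do not expect a genuine obstacle here: the corollary is essentially a formal consequence of Theorem 2 together with the trivial non-coincidence of the \emph{min} and \emph{max} aggregators. The only subtlety worth flagging explicitly in the writeup is that this non-coincidence requires $|N|\geq 2$ — for a one-agent society both $\mathit{Inc}$ and $\mathit{U}$ collapse to $p_{1}(1)$ and there is no impossibility — so the statement is understood (as throughout the paper) for the nondegenerate case $n\geq 2$.
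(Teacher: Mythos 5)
Your proposal is correct and follows exactly the route the paper intends: the corollary is stated as a direct consequence of Theorem 2, and your derivation (any such FCIF would have to equal both $\mathit{Inc}$ and $\mathit{U}$, which differ on any profile with $n\geq 2$) is the obvious way to make that derivation explicit. The remark about the degenerate case $n=1$ is a reasonable clarification but does not change the argument.
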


\section{Dictatorship}
Kasher and Rubinstein use, in a section of their paper, a slightly modified version of the CIFs. They assume that there is a consensus in the society that there exists someone who is a $J$ and someone who is not a $J$.\\ 

Then, it follows that the only CIF (with the alternative domain and range conditions established for this case) that verifies Consensus and Independence is the Dictatorial one.\\

Here we define the Dictatorial FCIF with the agent $j$ as a dictator as:
$$ \mathit{D}(P_{1},\ldots,P_{N})(i)=D_{J}(i)=p_{j}(i)$$
\noindent for all $i\in N$.\\

There are many ways to interpret K-R's restrictions in our framework. One possibility is that in a profile $P_{i}$, there exists at least one $j$ such that $p_{i}(j)=1$ and one $k$ such that $p_{i}(k)=0$. We call this set of profiles $\textbf{P}^{\ast}$.\\

An alternative set of profiles is $\textbf{P}^{\ast\ast}$, in which for every agent $i$ there exists at least one $k$ and one $j$ such that $p_{i}(k)\geq \alpha$ and $p_{i}(j)\leq \alpha$.\\

Another possible case is that in which $p_{i}\neq \textbf{1}$ or $p_{i}\neq \textbf{0}$ for all $i\in N$. We call this set $\textbf{P}^{\ast\ast\ast}$.\\
With respect to the membership functions, we can consider the case in which there exist at least one $j$ and one $k$ such that $f_{J}(j)=1$ and $f_{J}(k)=0$. The class of such function is denoted $\textbf{F}^{\ast}$.\\

We define $\textbf{F}^{\ast\ast}$ as the set of membership functions such that for every profile $P$ there exists at least one $k$ and one $j$ such that $f_{J}^{P}(k)\geq \alpha$ and $f_{J}^{P}(k)\leq \alpha$.\\

Finally we have the family of functions such that $f_{J}\neq \textbf{1}$ or $f_{J}\neq \textbf{0}$. We call this set $\textbf{F}^{\ast\ast\ast}$.\\

It is easy to verify that:
$$\textbf{P}^{\ast}\subset \textbf{P}^{\ast\ast}\subset \textbf{P}^{\ast\ast\ast}$$ 
\noindent and
$$\textbf{F}^{\ast}\subset \textbf{F}^{\ast\ast}\subset \textbf{F}^{\ast\ast\ast}$$
A social planner may require different properties to be satisfied by the domain and range of membership functions. Depending on those specifications, there are various possibilities:\footnote{The domain and range restrictions introduce differences between FCIFs, even if they satisfy the same axioms. So, for instance, the Strong Liberal FCIF $L$ satisfies FC and FI, but it is not of the  \mbox{$FJ:\textbf{P}^{\ast\ast\ast}\rightarrow \textbf{F}^{\ast\ast\ast}$} type. Consider the following case, where $N=3$. Let $P=(\{1,0,0\},\{0,1,0\},\{0,0,1\})$, so we have that $P\in \textbf{P}^{\ast\ast\ast}$. But $L(i)=1$ for $i=1,2,3$, so we have that the range of $L$ is not necessarily included in $\textbf{F}^{\ast\ast\ast}$.}\footnote{The result for the FC and I case extends a conclusion advanced by Alcantud and de Andr\'es Calle (2017).}
\begin{teo}
	Consider FCIFs that satisfy axioms FC and FI or FC and I.	
	\begin{enumerate}
		\item  The Dictatorial is not the only FCIF such that $FJ: \textbf{P}^{\ast\ast\ast}\rightarrow\textbf{F}^{\ast\ast\ast}$.		
		\item The Dictatorial is the only FCIF such that $FJ: \textbf{P}^{\ast}\rightarrow\textbf{F}^{\ast\ast}$ or \mbox{$FJ: \textbf{P}^{\ast\ast}\rightarrow\textbf{F}^{\ast\ast}$}, and there is no FCIF such that $FJ:\textbf{P}^{\ast\ast\ast}\rightarrow\textbf{F}^{\ast\ast}$.		
		\item The Dictatorial is the only FCIF such that  $FJ: \textbf{P}^{\ast}\rightarrow\textbf{F}^{\ast}$, and there is no FCIF such that  $FJ: \textbf{P}^{\ast\ast}\rightarrow\textbf{F}^{\ast}$ or $FJ: \textbf{P}^{\ast\ast\ast}\rightarrow\textbf{F}^{\ast}$.
	\end{enumerate}
\end{teo}

\begin{proof} For simplicity, but without loss of generality, we will consider $\alpha=0.5$.
	\begin{enumerate}
		\item \begin{itemize}
			\item [(a)] We represent with $|P^{i}|^{>} $ the number of $p_{j}(i)$s that are larger than $0.5$ and with $|P^{i}|^{<} $ the number of those less than $0.5$ in a profile $P\in \textbf{P}$.\\		
			
		Now we consider the following FCIF:\\ 
		$$f_{J}^{P}(i)= \left\{ \begin{array}{cccccc}
		\frac{\min_{j}p_{j}(i)+\max_{j}p_{j}(i)}{2} &   if  & |P^{i}|^{>}=n  \\
		\\ \frac{0.5+\max_{j}p_{j}(i)}{2} &   if  & |P^{i}|^{>}>|P^{i}|^{<}\\
		\\ p_{j}(i) & if & p_{j}(i)=p_{k}(i) & for & all & j,k\in N \\
		\\0.5&   if  & |P^{i}|^{<}=|P^{i}|^{>}\\
		\\ \frac{0.5+\min_{j}p_{j}(i)}{2} &   if  & |P^{i}|^{<}>|P^{i}|^{>}\\
		\\\frac{\min_{j}p_{j}(i)+\max_{j}p_{j}(i)}{2} &   if  &|P^{i}|^{<}=n 
		\end{array}
		\right.$$\\		
		This FCIF verifies FC and FI and is not the Dictatorial FCIF.\\
		
		\item [(b)] We call a FCIF {\em Democratic} if
		$$f_{J}(i)=\frac{p_{1}(i)+\ldots+p_{n}(i)}{n}$$ 
		\noindent for all $i\in N$.\\
		
		This FCIF verifies FC and I.
		\end{itemize}		
		\item The proof is for the case where the FCIF verifies FC and FI (the other case is analogous).\\
		
	    We denote with $|f_{J}^{P}|^{>} $ the number of $f_{J}^{P}(i)$s larger or equal to $0.5$ while $|f_{J}^{P}|^{<} $ are those less than $0.5$ in the membership function $f\in \textbf{F}$.\\		
	    
		We say that a coalition $L\subseteq N$ is {\em fuzzy semidecisive} for agent $i$, if the following conditions are satisfied for every profile $P\in \textbf{P}$:\\		
		\begin{equation}
		{[\mbox{for all} \ j\in L,\; p_{j}(i)\geq 0.5\; and\; \mbox{for all} \ j\notin L,\; p_{j}(i)<0.5]\Rightarrow f_{j}^{P}(i)\geq 0.5}
		\end{equation} 
		\noindent and
		\begin{equation}
		[\mbox{for all} \ j\in L,\; p_{j}(i)< 0.5\; and\; \mbox{for all} \ j\notin L,\; p_{j}(i)>0.5]\Rightarrow f_{j}^{P}(i)< 0.5. 
		\end{equation}	
		A coalition $L\subseteq N$ is called {\em fuzzy semidecisive} if it is fuzzy semidecisive for every agent $i$ in $N$.\\	
			
		Analogously, we say that $L\subseteq N$ is {\em fuzzy decisive} over agent $i$ if the following conditions are satisfied for every profile $P\in \textbf{P}$:\\		
		$$ [\mbox{for all} \ j\in L,\; p_{j}(i)\geq 0.5]\Rightarrow f_{j}^{P}(i)\geq 0.5 $$		
		\noindent and		
		$$ [\mbox{for all} \ j\in L,\; p_{j}(i)< 0.5]\Rightarrow f_{j}^{P}(i)< 0.5. $$		
		In the same way, $L\subseteq N$ is said {\it fuzzy decisive} if it is fuzzy decisive for every agent $i$ in $N$.\\		
		
		We first prove the existence of a semidecisive coalition for an agent $i$, and then show that is semidecisive for all $i\in N$. Without loss of generality we assume that $N=3$ (the result extends easily to all other cases).\\		
		
		Consider, also w.l.g., the profile 		
		$$P^{1}=(\{0.1,0.6,0.1\},\{0.9,0.2,0.3\},\{0.1,0.2,0.8\}).$$ 		
		By hypothesis, $|f_{J}^{P^{1}}|^{>}\neq 3 $ and $|f_{J}^{P^{1}}|^{<}\neq 3$.\\		
		
		Suppose $|f_{J}^{P^{1}}|^{>}=2$, and that $f_{J}^{P^{1}}(1),f_{J}^{P^{1}}(2)\geq0.5$.
		Now consider the profile 	
		$$P^{2}=(\{0.1,0.6,0.6\},\{0.9,0.2,0.5\},\{0.1,0.2,0.8\}).$$		
		By FI, \mbox{$f_{J}^{P^{2}}(1),f_{J}^{P^{2}}(2)\geq0.5$} and by FC $f_{J}^{P^{2}}(3)\geq 0.5$, a contradiction.\\	
			
		Then $|f_{J}^{P^{1}}|^{>}=1$.\\		
		
		Suppose now that $f_{J}^{P^{2}}(2)\geq 0.5$. By FI, 		
		\begin{equation}
		\mbox{for all} \ P\in\textbf{P}, [p_{1}(2)\geq0.5,p_{2}(2)<0.5,p_{3}(2)<0.5]\Rightarrow f_{J}^{P}(2)\geq0.5.
		\end{equation}		
		It follows that (1) is verified by $L=\{1\}$ for $i=2$.\\	
			
		Consider now the profile  $$P^{3}=(\{0.6,0.1,0.2\},\{0.1,0.7,0.1\},\{0.1,0.8,0.4\})$$		
		\noindent and suppose by contradiction that $f_{J}^{P^{3}}(2)\geq 0.5$.\\ 	
			
		By FC, \mbox{$f_{J}^{P^{3}}(3)<0.5$}. Moreover, $f_{J}^{P^{3}}(1)<0.5$.\\	
			
		Otherwise, if $f_{J}^{P^{3}}(1)\geq0.5$, the profile $$P^{4}=(\{0.6,0.1,0.7\},\{0.1,0.7,0.9\},\{0.1,0.8,0.9\})$$
		 would lead to a contradiction, because by FI, $f_{J}^{P^{4}}(1),f_{J}^{P^{4}}(2)\geq 0.5$; and by FC, $f_{J}^{P^{4}}(3)> 0.5$.\\	
		 	
		Thus, the only remaining possibility is that $f_{J}^{P^{4}}(2)>0.5$ and by FI,		
		$$\mbox{for all} \ P\in\textbf{P}, [p_{1}(2)<0.5,p_{2}(2)>0.5,p_{3}(2)>0.5]\Rightarrow f_{J}^{P}(2)>0.5.$$		
		Consider the following profile 
				$$P^{5}=(\{0.2,0.1,0.7\},\{0.6,0.2,0.1\},\{0.6,0.1,0.3\}).$$		
		By FC $f_{J}^{P^{5}}(2)<0.5$.\\		
		
		If $f_{J}^{P^{5}}(3)\geq 0.5$, the profile 		
		$$P^{6}=(\{0.6,0.1,0.7\},\{0.6,0.8,0.1\},\{0.6,0.8,0.3\})$$		
		\noindent yields  $|f_{J}^{P^{6}}|^{>}=3$, a contradiction.\\	
			
		Then, the only possibility is that $f_{J}^{P^{5}}(1)\geq 0.5$ and by FI, 		
		\begin{equation}\mbox{for all} \ P\in\textbf{P}, [p_{1}(1)<0.5,p_{2}(1)\geq0.5,p_{3}(1)\geq 0.5]\Rightarrow f_{J}^{P}(1)\geq 0.5.
		\end{equation}		
		Finally, consider		
		$$P^{7}=(\{0.1,0.6,0.7\},\{0.6,0.3,0.9\},\{0.6,0.1,0.7\}).$$		
		By (3) and (4) we have that $f_{J}^{P^{7}}(1),f_{J}^{P^{7}}(2)\geq 0.5$ and by FC $f_{J}^{P^{7}}(3)\geq 0.5$, a contradiction.\\		
		
		Thus (2) is verified for agent $1$, who is fuzzy semidecisive for agent $2$.\\		
		
		Now, we can prove that if there exists a fuzzy decisive coalition $L\subseteq N$ for some $i\in N$, then $L$ is fuzzy semidecisive. Without loss of generality, we can suppose that $L = \{1\}$ is fuzzy decisive for $2$.\\	
			
		Let 
		$$P^{8}=(\{0.4,0.1,0.7\},\{0.6,0.2,0.1\},\{0.6,0.3,0.3\}).$$		
		By FC $f_{J}^{P^{8}}(2)<0.5$.\\	
			
		Moreover, if only $f_{J}^{P^{8}}(1)\geq 0.5$, then by FI, FC and the fact that agent $1$ is fuzzy decisive over $2$, we have that $|f_{J}^{P^{9}}|^{>}=3$ with		
		$$P^{9}=(\{0.4,0.6,0.7\},\{0.6,0.2,0.9\},\{0.6,0.3,0.8\}),$$		
		\noindent a contradiction.\\	
			
		Then $f_{J}^{P^{8}}(3)\geq 0.5$ and by FI, 		
		$$\mbox{for all} \ P\in\textbf{P}, [p_{1}(3)\geq0.5,p_{2}(3)<0.5,p_{3}(3)<0.5]\Rightarrow f_{J}^{P}(3)\geq 0.5.$$		
		Consider the profile 		
		$$P^{10}=(\{0.6,0.1,0.3\},\{0.2,0.2,0.7\},\{0.2,0.3,0.9\}).$$		
		If $f_{J}^{P^{10}}(3)\geq 0.5$, then by FI, FC and the fact that agent $1$ is fuzzy decisive over $2$, we have that $|f_{J}^{P^{11}}|^{>}=3$ with 		
		$$P^{11} =(\{0.7,0.6,0.1\},\{0.6,0.2,0.9\},\{0.6,0.3,0.8\}),$$		
		\noindent a contradiction.\\
				
		Thus $f_{J}^{P^{10}}(3)<0.5$ and by FI, 		
		$$\mbox{for all} \ P\in\textbf{P}, [p_{1}(3)<0.5,p_{2}(3)\geq0.5,p_{3}(3)\geq 0.5]\Rightarrow f_{J}^{P}(3)<0.5.$$		
		That is, agent $1$ is fuzzy semidecisive over $3$.\\	
			
		Using the same argument, agent $1$ is fuzzy semidecisive over herself. Then by definition $L= \{1\}$ is fuzzy semidecisive.\\		
		
		Now we show that the intersection of two fuzzy semidecisive coalitions is fuzzy semidecisive. First we prove that $L\cap L^{'} \neq\emptyset$. Suppose that this is not the case.\\		
		
		Without loss of generality, let $L=\{1\}$ y $L^{'} =\{2,3\}$.\\		
		
		Consider the profile 		
		$$P=(\{0.7,0.1,0.1\},\{0.4,0.7,0.1\},\{0.1,0.9,0.3\}).$$		
		Then $f_{J}^{P}(1),f_{J}^{P}(2)\geq 0.5$.\\	
			
		But by FI and FC we have $|f_{J}^{P^{'}}|^{>}=3$ with 		
		$$P^{'} =(\{0.7,0.1,0.7\},\{0.3,0.8,0.9\},\{0.1,0.7,0.8\}),$$		
		\noindent a contradiction.\\		
		
		Then $L\cap L^{'} \neq\emptyset$.\\		
		
		Second we prove that $L\cap L^{'}$ is fuzzy semidecisive. Without loss of generality, let $L=\{1,3\}$ and $L^{'}=\{1,2\}$.\\	
			
		Consider another profile 		
		$$P=(\{0.7,0.1,0.1\},\{0.4,0.1,0.8\},\{0.1,0.9,0.3\})$$		
		\noindent and suppose that, $f_{J}^{P}(1)<0.5$.\\		
		
		If $f_{J}^{P}(2)\geq 0.5$, then by FI, FC and the fact that $L^{'}$ is fuzzy semidecisive, we have that $|f_{J}^{P^{'}}|^{>}=3$ with 		
		$$P^{'} =(\{0.7,0.1,0.7\},\{0.8,0.1,0.9\},\{0.1,0.7,0.8\}),$$		
		\noindent a contradiction.\\	
			
		Alternatively, if $f_{J}^{P}(3)\geq 0.5$, then consider the following profile		
		$$P^{''} =(\{0.7,0.6,0.1\},\{0.8,0.1,0.8\},\{0.1,0.9,0.3\}).$$		
		By FI, $f_{J}^{P^{''}}(3)\geq 0.5$. \\	
			
		Because $L$ and $L^{'}$ are fuzzy semidecisive, $f_{J}^{P''}(1),f_{J}^{P^{''}}(2)\geq 0.5$.\\		
		
		Then $|f_{J}^{P^{''}}|^{>}=3$, a contradiction.\\	
			
		Thus $ f_{J}^{P}(1)\geq 0.5$ and by FI 		
		\begin{equation}\mbox{for all} \ P\in\textbf{P}, [p_{1}(1)\geq 0.5,p_{2}(1)<0.5,p_{3}(1)<0.5]\Rightarrow f_{J}^{P}(1)\geq 0.5.\end{equation}		
		Now consider another profile 		
		$$P=(\{0.4,0.1,0.7\},\{0.8,0.4,0.1\},\{0.8,0.2,0.3\}).$$		
		If $f_{J}^{P}(1)\geq 0.5$, then it follows from FI and the fact that $L$ and $L^{'}$ are fuzzy semidecisive, that $|f_{J}^{P^{'}}|^{>}=3$ if 		
		$$P^{'}=(\{0.4,0.6,0.7\},\{0.8,0.6,0.1\},\{0.8,0.2,0.8\}),$$		
		\noindent a contradiction.\\	
			
		Thus $f_{J}^{P}(1)<0.5$ and by FI 		
		\begin{equation} \mbox{for all} \ P\in\textbf{P}, [p_{1}(1)<0.5,p_{2}(1)\geq0.5,p_{3}(1)\geq0.5]\Rightarrow f_{J}^{P}(1)<0.5.\end{equation}		
		Then we have that $L = \{1\}$ is fuzzy semidecisive.\\	
			
		We can prove that given a coalition $L\subseteq N$, either $L$ is fuzzy semidecisive or $N\setminus L$ is fuzzy semidecisive.\\		
		
		By FC, $N$ is fuzzy semidecisive over $N$. Without loss of generality, fix $L=\{1,2\}$ and suppose by contradiction that $L$ is not fuzzy semidecisive. Then it must exist a profile $P$ and an individual $i\in N$, such that $p_{1}(i)\geq0.5,\ p_{2}(i)\geq 0.5,\ p_{3}(i)<0.5$ and $f_{J}^{P}(i)<0.5$; or $p_{1}(i)<0.5,\ p_{2}(i)<0.5,\ p_{3}(i)\geq0.5$ and $f_{J}^{P}(i)\geq 0.5$.\\	
			
		Suppose the latter is the case, and let $i=1$.  By FI, 		
		\begin{equation}
		\mbox{for all} \ P\in\textbf{P}, [p_{1}(1)<0.5,p_{2}(1)<0.5,p_{3}(1)\geq 0.5]\Rightarrow f_{J}^{P}(1)\geq 0.5.\end{equation}		
		We want to prove that $N\setminus L=\{3\}$ is fuzzy semidecisive for agent $1$.  To do this, we have to show that the following is the case: 		
		\begin{equation}
		\mbox{for all} \ P\in\textbf{P}, [p_{1}(1)\geq0.5,\;p_{2}(1)\geq 0.5,\;p_{3}(1)<0.5]\Rightarrow f_{J}^{P}(1)<0.5.\end{equation}		
		Consider the profile 		
		$$P^{1}=(\{0.6,0.1,0.1\},\{0.8,0.4,0.1\},\{0.4,0.6,0.3\}).$$ 		
		If $f_{J}^{P^{1}}(1)\geq 0.5$, the wanted result follows from FI. If not, i.e., if $f_{J}^{P^{1}}(1)<0.5$, we proceed as follows. 
		First, we note that by FC, $f_{J}^{P^{1}}(3)<0.5$. Second, if $f_{J}^{P^{1}}(2)\geq 0.5$, then we have that $|f_{J}^{P^{2}}|^{>}=3$ if 		
		$$P^{2}=(\{0.6,0.3,0.7\},\{0.8,0.1,0.9\},\{0.2,0.6,0.8\}).$$ 		
		Then it must be that case that $f_{J}^{P^{1}}(1)\geq 0.5$, and by FI, 		
		\begin{equation} \mbox{for all} \ P\in\textbf{P}, [p_{1}(1)\geq0.5,p_{2}(1)\geq 0.5,p_{3}(1)<0.5]\Rightarrow f_{J}^{P}(1)\geq 0.5.\end{equation}		
		Consider profile 		
		$$P^{3}=(\{0.4,0.1,0.7\},\{0.3,0.4,0.9\},\{0.4,0.6,0.3\}).$$ 		
		By FC $f_{J}^{P^{3}}(1)<0.5$.\\		
		If $f_{J}^{P^{3}}(2)\geq 0.5$, then by FC and FI we have that $|f_{J}^{P^{4}}|^{>}=3$ if 		
		$$P^{4}=(\{0.6,0.3,0.7\},\{0.8,0.1,0.9\},\{0.2,0.6,0.8\}).$$ 		
		Thus, it must be $f_{J}^{P^{3}}(3)\geq 0.5$; and by FI		
		\begin{equation}\mbox{for all} \ P\in\textbf{P}, [p_{1}(3)\geq 0.5,p_{2}(3)\geq 0.5,p_{3}(3)<0.5]\Rightarrow f_{J}^{P}(3)\geq 0.5.\end{equation}		
		But then, by FI, (7) y (10), we have that $|f_{J}^{P^{5}}|^{>}=3$ if 		
		$$P^{5}=(\{0.1,0.6,0.7\},\{0.3,0.6,0.9\},\{0.8,0.6,0.1\}),$$		
		\noindent a contradiction. \\	
			
		Then we have that $N\setminus L=\{3\}$ is fuzzy semidecisive for agent $1$. Finally we obtain that $L=\{3\}$ is fuzzy semidecisive for $N$.\\
				
		Now we prove that if a coalition $L$ is fuzzy semidecisive, with $|L|>1$, the subsets and supersets of $L$ are also fuzzy semidecisive.\\	
			
		Let $L\subseteq L^{'}\subseteq N$, with $L$ fuzzy semidecisive. If $L^{'}$ is not fuzzy semidecisive, then $N\setminus L^{'}$ is fuzzy semidecisive.\\
				
		But then $L\subseteq L^{'}$, $(N\setminus L^{'})\cap L=\emptyset$, contradicting our previous proof.\\	
			
		Thus $L^{'}$ must be fuzzy semidecisive over $N$.\\		
		
		Now consider $h\in L$. If $L\setminus \{h\}$ is fuzzy semidecisive, the result is proved. If not, we have that $N\setminus (L\setminus \{h\})$ is fuzzy semidecisive.\\		
		
		Then we have that $N\setminus (L\setminus \{h\})\cap L=\{h\}$.\\	
			
		The next step is to prove that there always exists an agent $h\in N$ such that $\{h\}$ is fuzzy semidecisive.\\	
			
		By FC, $N$ is fuzzy semidecisive, thus there exists $L^{'}\subseteq N$ such that $N\setminus L^{'}$ is fuzzy semidecisive.
		Then there must exist $L^{''}$ such that $(N\setminus L^{'})\setminus L^{''}$ is fuzzy semidecisive.\\	
			
		Because $N$ is finite, by iterating this argument we find an $h\in N$ that is fuzzy semidecisive over $N$.\\	
			
		Finally we prove that if a coalition $L\subseteq N$ is fuzzy semidecisive, then it is fuzzy decisive. For that, consider a fuzzy semidecisive coalition $L\subseteq N$.\\	
			
		Then there exists $h\in L$ that is fuzzy semidecisive over $N$. Without loss of generality, suppose that $h=1$.
		Suppose that $\{1\}$ is not fuzzy decisive for agent $2$. \\	
			
		Then it must exist a profile $P$ such that		
		\begin{itemize}
		\item[(a)]  $p_{1}(2)\geq 0.5$ and $f_{J}^{P}(2)<0.5$ or
		\item[(b)]  $p_{1}(2)<0.5$ and $f_{J}^{P}(2)\geq 0.5$.
		\end{itemize}		
		Suppose that (a) is the case (the other case is analogous). Since by hypothesis $\{1\}$ is fuzzy semidecisive for agent $2$, there has to exist a $j\neq 1$ such that $p_{j}(2)>0.5$.\\	
			
		Moreover, there must exist an agent $k\in N\setminus\{1,j\}$ such that $p_{k}(2)<0.5$.	Otherwise, by FC we get $f_{J}^{P}(2)\geq 0.5$. Without loss of generality, consider the case where 		
		$$P=(\{0.1,0.8,0.2\},\{0.1,0.1,0.9\},\{0.2,0.6,0.1\}).$$ 		
		By FC, $f_{J}^{P}(1)<0.5$. Then $f_{J}^{P}(3)\geq 0.5$.\\	
			
		By FI 		
		\begin{equation} \mbox{for all} \ P\in\textbf{P}, [p_{1}(3)<0.5,p_{2}(3)\geq 0.5,p_{3}(3)\geq 0.5]\Rightarrow f_{J}^{P}(3)\geq 0.5.\end{equation}		
		Consider the following profile 		
		$$P^{'}=(\{0.1,0.3,0.8\},\{0.9,0.1,0.2\},\{0.2,0.6,0.9\}).$$		
		If $f_{J}^{P^{'}}(3)<0.5$, then by FI 		
		\begin{equation} \mbox{for all} \ P\in\textbf{P}, [p_{1}(3)\geq 0.5,p_{2}(3)<0.5,p_{3}(3)\geq 0.5]\Rightarrow f_{J}^{P}(3)<0.5.\end{equation}		
		Because $\{2\}$ is fuzzy semidecisive over $3$, by (11) and (12), we get that it is fuzzy semidecisive over $N$.\\		
		But, $\{1\}\cap\{2\}=\emptyset$, a contradiction.\\	
			
		Then $f_{J}^{P'}(3)\geq 0.5$.\\		
		
		Moreover, by FI 		
		$$\mbox{for all} \ P\in\textbf{P}, [p_{1}(3)\geq 0.5,p_{2}(3)<0.5,p_{3}(3)\geq 0.5]\Rightarrow f_{J}^{P}(3)\geq 0.5.$$  
		Because $\{1\}$ is fuzzy semidecisive over $3$, we also know that 
		$$\mbox{for all} \ P\in\textbf{P}, [p_{1}(3)\geq 0.5,p_{2}(3)<0.5,p_{3}(3)<0.5]\Rightarrow f_{J}^{P}(3)\geq 0.5.$$		
		Then if for all $P\in \textbf{P}$, 		
		$$[p_{1}(3)\geq 0.5,\;p_{2}(3)<0.5,\;p_{3}(3)<0.5]\Rightarrow f_{J}^{P}(3)\geq 0.5,$$		
		\noindent we would get the desired result and $\{1\}$ would be fuzzy decisive for agent $3$. \\		
		
		Otherwise, we can repeat the previous argument and show that $\{3\}$ is fuzzy semidecisive over $N$, which would lead us to a contradiction since $\{1\}\cap\{3\}=\emptyset$.\\	
			
		Because $1\in L$, we have that $L$ is fuzzy semidecisive for agent $3$. \\	
			
		Finally, a similar argument shows that $L$ is fuzzy decisive for all $i\in N$.\\		
		
		So we have that whenever a FCIF verifies FC and FI, there must exist an agent fuzzy decisive over $N$.\\	
			
		The impossibility of a FCIF is proved by set inclusion, since the Dictatorial FCIF does not verify the hypothesis in $\textbf{P}^{\ast\ast\ast}\rightarrow\textbf{F}^{\ast\ast}$.		
		\item The proof is similar to (2).
	\end{enumerate}
\end{proof}
We see in the following example that the aggregator proposed in part 1a of Theorem 3 is not the Dictatorial FCIF.\\
\begin{exa}
If we consider the following profile $P=(\{0.2,0.3,0.9\},\{0.9,0.9,1\},$ $\{1,0,0.3\})$, we obtain $f_{J}^{P}(1)=0.75$, $f_{J}^{P}(2)=0.35$ and $f_{J}^{P}(3)=0.75$, which is different from the opinion of any agent.
\end{exa}
The election of the fuzzy structure and even the election between the fuzzy or crisp version of the Independence axiom provides the opportunity to find non dictatorial rules, something that is not possible in the case that K-R analyze.\\ 

\textit{Remark about the fuzzy consensus axiom:} A function $f:\mathbb{R}^{k}\rightarrow \mathbb{R}$ that verifies $\min\{a_{1},\ldots,a_{k}\}\leq f(a_{1},\ldots,a_{k})\leq \max\{a_{1},\ldots,a_{k}\}$ is called a $k$-{\em dimensional mean} (Hajja, 2013). A FCIF that verifies FC is then, a $n$-dimensional mean.\\ 

The mean can yield different levels of ``democracy'', in the sense that the Dictatorial FCIF is a mean while the Democratic FCIF is also a mean.\\

A natural question is whether every $n$-dimensional mean yields a fuzzy aggregation function.\\

The following example shows that the answer is negative.
\begin{exa}
{\em The Inclusive and Unanimous FCIFs are $n$-dimensional means.\\ 
Consider the Unanimous FCIF and the profile 
$$P=(\{0,0.2,0.3\},\{1,0,0.5\},\{0.3,0.8,0\})$$ 
We obtain $U(i)=0$ for all $i\in N$, a membership function that does not belong to $F^{\ast\ast\ast}$.}
\end{exa}

\section{Conclusion}
In this work we present a new analysis, from a fuzzy point of view, of the Group Identification Problem. The opinions of the agents are no longer crisp statements about the membership or not to a group. Instead of that, their opinions are expressed in terms of degrees of membership to the class of $J$s.\\

We present the axioms that have already been analyzed in the literature and work with fuzzy versions of them. In the case of `Liberal' aggregators, the uniqueness and impossibility results from K-R and F-T still remain.\\

The axioms FL and L are very restrictive, and do not allow more rules other than the Strong Liberal FCIF. They even yield an impossibility result when we consider such a strong axiom as FSMON. With the FEL and EL axioms, we obtain the same results as in the crisp version.\\

When we deal with the Dictatorial aggregator, the results are richer than in the crisp version, due to the different interpretations that we may give to the domain and range conditions postulated in K-R.\\

Depending on the goals of the social planner we can have several, just one or no rules satisfying the desired properties.\\

The binary nature of the determination of whether someone has more or less than an $\alpha$ degree of acceptance, leads in general to the preservation of the uniqueness and impossibility results. However, when we modify the domain and range of the FCIFs, we get the possibility of designing more rules than just the Dictatorial one.

\end{document}